\pdfoutput=1
\documentclass[11pt]{article}

\usepackage[utf8]{inputenc}
\usepackage[T1]{fontenc}
\usepackage{lmodern}
\usepackage{microtype}

\usepackage[letterpaper, margin=1in]{geometry}
\usepackage{indentfirst}

\usepackage{amsmath, amssymb, amsthm}
\usepackage{mathrsfs}
\usepackage{mathtools}

\usepackage{algorithm}
\usepackage{algpseudocode}
\algrenewcommand\algorithmicrequire{\textbf{Input:}}
\algrenewcommand\algorithmicensure{\textbf{Output:}}
\usepackage{graphicx}
\usepackage{subcaption}
\usepackage{booktabs}
\usepackage{enumitem}
\usepackage{xcolor}

\usepackage{authblk}

\definecolor{linkblue}{RGB}{0,51,153}
\usepackage[
    colorlinks=true,
    linkcolor=linkblue,
    citecolor=linkblue,
    urlcolor=linkblue,
    bookmarksnumbered=true,
    bookmarksopen=true,
]{hyperref}

\theoremstyle{plain}
\newtheorem{theorem}{Theorem}[section]
\newtheorem{lemma}[theorem]{Lemma}

\theoremstyle{definition}

\theoremstyle{remark}

\newcommand{\R}{\mathbb{R}}

\title{On the Cardinality of Optimal Representations in the Binary-Source Information Bottleneck}

\author[1]{Dier Tang\thanks{Email: \texttt{tangde\_math@connect.hku.hk}}}
\author[2]{Jun Chen\thanks{Email: \texttt{chenjun@mcmaster.ca}}}
\affil[1]{Department of Mathematics, The University of Hong Kong}
\affil[2]{Electrical and Computer Engineering, McMaster University}

\date{}

\begin{document}

\maketitle

\begin{abstract}
The information bottleneck (IB) seeks a representation $U$ of a source $X$ that retains as much information as possible about a target $Y$, subject to a constraint on $I(U;X)$. A classical argument shows that it suffices to consider representations with at most $|\mathcal{X}|+1$ symbols, and this bound is known to be tight whenever $|\mathcal{X}| \geq 3$. We show that the binary case behaves differently: if $X$ is binary and $Y$ is finite, then for every joint distribution of $(X,Y)$ and every rate constraint, the IB optimum is attained by a binary $U$. Hence the bound $|\mathcal{U}| \leq |\mathcal{X}|+1$ sharpens to $|\mathcal{U}| \leq |\mathcal{X}|$ for binary sources. The proof combines a separating hyperplane argument with the observation that, for a binary source, the ratio of the second derivatives of the two entropy functions involved is concave.

\medskip
\noindent\textbf{Keywords:} Information bottleneck; cardinality bound; binary source; optimal representation; separating hyperplane
\end{abstract}

\section{Introduction}
\label{sec: introduction}

The information bottleneck (IB) \cite{tishby1999information, zaidi2020information} formalizes the extraction of relevant information: given a pair $(X,Y)$ on finite alphabets, one looks for a representation $U$ of $X$, with $U \leftrightarrow X \leftrightarrow Y$, that is as informative about $Y$ as possible while keeping $I(U;X)$ small. Beyond its information-theoretic interest, the IB has served as a lens for analyzing deep neural networks \cite{tishby2015deep, shwartzziv2017opening} and as an objective for learning compact yet informative representations. For $R \geq 0$, the optimal tradeoff is described by the IB function
\begin{equation}
\label{eq: IB function}
    F(R) = \max \big\{ I(U;Y): U \leftrightarrow X \leftrightarrow Y, \ I(U;X) \leq R \big\},
\end{equation}
which is closely related to the conditional entropy bound of Witsenhausen and Wyner \cite{witsenhausen1975conditional}.

A basic question is how large the alphabet $\mathcal{U}$ must be. Witsenhausen and Wyner \cite{witsenhausen1975conditional} first showed that
\[
|\mathcal{U}| \leq |\mathcal{X}| + 1
\]
suffices; the same bound follows from the support lemma \cite{elgamal2011network}. Gilad-Bachrach et al.\ \cite{giladbachrach2003information} claimed that $|\mathcal{U}| \leq |\mathcal{X}|$ suffices.

Benger, Asoodeh and Chen \cite{benger2023cardinality} pointed out that this claim conflates the IB Lagrangian, for which $|\mathcal{X}|$ symbols do suffice, with the IB function $F$ itself, whose graph may contain linear segments. They further showed that the bound $|\mathcal{X}|+1$ is tight: for Hamming channels with uniform input and $|\mathcal{X}| \geq 3$, $|\mathcal{X}|+1$ symbols are necessary for all rates in some interval $(0, R_c)$.

For a binary target $Y$, Tang and Han \cite{tang2026chernoff} proposed a bottleneck model that replaces $I(U;Y)$ by the Chernoff information between $P_{U|Y=0}$ and $P_{U|Y=1}$, and sharpened its cardinality bound from $|\mathcal{X}|+1$ to $k+1$, where $k \leq |\mathcal{X}|$ is the number of distinct likelihood ratios $P_{X|Y=1}(x)/P_{X|Y=0}(x)$. The same argument yields this bound for the classical IB with binary $Y$.

The binary case $|\mathcal{X}| = 2$ is not covered by these results. For a doubly symmetric binary source, Mrs.\ Gerber's Lemma \cite{wyner1973theorem} shows that a binary $U$ is optimal, but this relies on the symmetry of the channel. In this paper we show that, for a binary source, a binary $U$ is always optimal:

\begin{itemize}[leftmargin=*]
    \item For every joint distribution $P_{XY}$ with $\mathcal{X} = \{0,1\}$ and $\mathcal{Y}$ finite, and every $R \geq 0$, the maximum in \eqref{eq: IB function} is attained by some $U$ with $|\mathcal{U}| \leq 2$ (Theorem \ref{thm: main}).
    \item Consequently, the bound $|\mathcal{X}| + 1$ is never tight for binary sources, and the condition $|\mathcal{X}| \geq 3$ in \cite{benger2023cardinality} cannot be dropped.
\end{itemize}

The proof proceeds as follows. We transform the original problem into minimizing $H(Y|U)$ over the convex hull of a curve. By the separating hyperplane theorem, every atom of an optimal $U$ becomes a minimizer of a function $\psi$. We prove that $\psi$ either has at most two minimizers or is constant. In the first case, we merge the symbols of $U$ into a binary $U$ that remains optimal; in the second case, we find a binary optimal $U$ by time-sharing. The key step is a curvature computation that is specific to binary sources.

The rest of the paper is organized as follows. Section \ref{sec: setup} states the setup and the main result, Section \ref{sec: preliminaries} collects two lemmas, Section \ref{sec: proof} proves the main result, and Section \ref{sec: discussion} explains why the argument does not extend to $|\mathcal{X}| \geq 3$ and concludes the paper.

\section{Problem Setup and Main Result}
\label{sec: setup}

Throughout, entropies are in nats and $H(\cdot)$ also denotes the entropy of a probability mass function; we follow the notation of \cite{cover2006elements}.

\begin{itemize}
    \item Let $\mathcal{X} = \{0,1\}$, and let $P_{X} = (1-p, p)$ with $0 < p < 1$ (otherwise everything is trivial).

    \item Let $W = P_{Y|X}$, and denote its rows by $w_{0} = P_{Y|X=0}$ and $w_{1} = P_{Y|X=1}$.

    \item For $0 \leq q \leq 1$, define
    \[
    f(q) = (1-q) w_{0} + q w_{1},
    \]
    with entries $f_y(q) = (1-q) w_0(y) + q w_1(y)$, $y \in \mathcal{Y}$.

    \item Let $h(q)$ be the binary entropy of $q$, and $g(q) = H(f(q))$.

    \item Let $\gamma(q) = (q, h(q), g(q)): [0,1] \to \R^3$, and let $K$ be its convex hull,
    \[
    K = \mathrm{conv} \ \gamma([0,1]) = \Big\{ \sum_{i} \lambda_{i} \gamma(q_{i}): \lambda_{i} \geq 0, \ \sum_{i} \lambda_{i} = 1, \ q_{i} \in [0,1] \Big\}.
    \]
\end{itemize}

The maximum in \eqref{eq: IB function} is taken over $U$ with finite alphabet $\mathcal{U}$. By the general cardinality bound \cite{witsenhausen1975conditional}, it is attained by some $U$ with $|\mathcal{U}| \leq |\mathcal{X}| + 1$. Our main result sharpens this bound in the binary case.

\begin{theorem}
\label{thm: main}
Let $\mathcal{X} = \{0,1\}$ and let $\mathcal{Y}$ be finite. Then, for any $R \geq 0$, the maximum in \eqref{eq: IB function} is attained by some $U$ with $|\mathcal{U}| \leq 2$. That is, the general cardinality bound $|\mathcal{U}| \leq |\mathcal{X}| + 1$ sharpens to $|\mathcal{U}| \leq |\mathcal{X}|$ in the binary case.
\end{theorem}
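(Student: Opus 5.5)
We would follow the route outlined in the introduction. First, we recast \eqref{eq: IB function} geometrically. A representation $U$ with $P_U(u)=\lambda_u$ and $q_u = P(X=1\mid U=u)$ must satisfy $\sum_u \lambda_u q_u = p$. For it we have $I(U;X) = h(p) - \sum_u \lambda_u h(q_u)$ and $I(U;Y) = H(f(p)) - \sum_u \lambda_u g(q_u)$. Hence $F(R) = g(p) - \min\{ z : (p, t, z) \in K,\ t \geq h(p) - R\}$. The minimum is attained, since $K$ is compact by Carath\'eodory. Let $z^\ast$ denote the optimal value, and fix an optimal $U$ with atoms $q_1,\dots,q_m$ and weights $\lambda_u>0$; we may take $m\le 3$ by the classical bound.

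Second, the optimal point $k^\ast=(p,t^\ast,z^\ast)$ lies on the boundary of $K$, because it minimizes $z$ over a slice. The separating hyperplane theorem then gives a nonzero $(a,b,c)$ with $a q + b t + c z \ge a p + b t^\ast + c z^\ast$ on $K$. Whenever $c>0$ we normalize to $c=1$, and a short argument gives $b \ge 0$: decreasing $t$ stays feasible, so the constraint direction has the right sign. Define
\[
\psi(q) = a q + b h(q) + g(q).
\]
Then $\psi \ge \psi$-value of the hyperplane on $[0,1]$. Since $k^\ast = \sum_u \lambda_u \gamma(q_u)$, every atom $q_u$ is a minimizer of $\psi$. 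The degenerate case $c=0$ has to be handled separately. There the hyperplane is vertical, so $(p,t^\ast)$ lies on the boundary of the projection of $K$, that is, of the convex hull of $(q,h(q))$. That boundary consists of the curve $t = h(q)$ together with the segment $t=0$ joining $q=0$ and $q=1$, so two atoms already suffice.

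Third comes the key lemma: $\psi$ either is constant on $[0,1]$ or has at most two minimizers. We have $\psi'' = b h'' + g''$, where $h''(q) = -1/(q(1-q))$ and
\[
g''(q) = -\sum_y \frac{(w_1(y)-w_0(y))^2}{f_y(q)} .
\]
We would examine $r(q) = g''(q)/h''(q) = q(1-q)\sum_y (w_1(y)-w_0(y))^2/f_y(q) \ge 0$ and show that it is concave on $(0,1)$. Each summand has the form $q(1-q)/(\alpha + \beta q)$ with $f_y$ affine and positive, and its second derivative should be checked to be $\le 0$. This is the curvature computation the abstract alludes to, and we expect it to be the main obstacle, including the care needed where $f_y$ vanishes at an endpoint.

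Given concavity, $\psi'' = h''(b + \cdot)$ up to sign bookkeeping, namely $\psi''(q) = -\frac{1}{q(1-q)}(b + r(q))$ with the correct sign convention. The set where $\psi$ is convex is then the set where a concave function crosses a threshold, which is an interval. So $\psi$ is concave–convex–concave in some order, with a single convex middle region. A $C^2$ function on an interval whose region of convexity is a single interval can have at most two global minimizers unless it is affine on an interval. Indeed, three minimizers $q_1<q_2<q_3$ would force concave stretches between consecutive minima, and hence a concave-then-convex pattern twice. Analyticity of $\psi$ on $(0,1)$ turns "affine on an interval" into "$\psi$ is affine", and then into constant because it is minimized at interior points. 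Endpoint minimizers need separate inspection, using the infinite slope of $h$ when $b>0$.

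Finally, we conclude in the two cases. If $\psi$ has at most two minimizers $s_1,s_2$, then all atoms $q_u$ lie in $\{s_1,s_2\}$. Merging atoms with equal $q_u$ preserves $\sum \lambda \gamma(q)$, which gives a binary $U$. If instead $\psi$ is constant, then $g = -aq - bh + \text{const}$ on $[0,1]$, so $K$ lies in a plane and $z$ is affine in $(q,t)$ on $K$. Any point of $K$ with first coordinate $p$ and second coordinate $t^\ast$ is then optimal. Such a point is obtained from two atoms: time-sharing between $\gamma(0)$-type points and a point $\gamma(q)$ with $q$ chosen on the segment through $p$, using the planar description of $\mathrm{conv}\{(q,h(q))\}$ from the second step. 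Either way the maximum is attained with $|\mathcal U|\le 2$.
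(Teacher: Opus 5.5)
Your overall plan matches the paper's: reduce to $K$, take a hyperplane at $k^\ast$, form $\psi$, show $r=g''/h''$ is concave, then merge or time-share. However, the key lemma is argued from an inverted picture of $\psi$ and does not go through as written. Since $h''<0$, one has $\psi''(q)<0$ exactly when $b+r(q)>0$. This is a \emph{superlevel} set of the concave function $b+r$. So it is the region of strict concavity $\mathcal{Q}=\{q:\psi''(q)<0\}$ that is an interval. The convex region $\{r\le -b\}$ is its complement, and in general it consists of two pieces, one touching each endpoint. For a binary symmetric channel, $r$ vanishes at $0$ and $1$, so $\psi$ is convex near both ends whenever $b<0$.

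The shape lemma you invoke is false. With $x=q-\tfrac12$, the polynomial $x^2/4-x^4$ is concave--convex--concave on $[0,1]$ with a single convex middle region. It has the three global minimizers $0,\tfrac12,1$ and is affine on no interval. So your route leaves exactly the endpoint cases open, and your suggested patch does not close them: for $b>0$ one gets $\psi'(0^+)=+\infty$ and $\psi'(1^-)=-\infty$, which makes $0$ and $1$ strict local minima rather than excluding them.

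With the correct picture, no endpoint analysis is needed. If $q_1<q_2<q_3$ are minimizers, then $q_2$ is interior, so $\psi''(q_2)\ge 0$ and $q_2\notin\mathcal{Q}$. Being an interval, $\mathcal{Q}$ misses one of $(q_1,q_2)$ and $(q_2,q_3)$. On that gap $\psi$ is convex with its minimum value at both ends, hence constant there. Analyticity, as you planned, then makes $\psi$ affine, hence constant, on $[0,1]$.

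The sign bookkeeping in your second step also needs repair, and part of it is essential. A bare supporting hyperplane at the boundary point $k^\ast$ need not have $c\ge 0$. For example, at $R=0$ the point $k^\ast=\gamma(p)$ is supported by $g'(p)\,q-z$ (by concavity of $g$), which has $c=-1$. Yet $c\ge 0$ is exactly what makes $b+cr$ concave, and you never treat $c<0$. Instead, separate $K$ from the convex set $\{(p,t,z): t\ge h(p)-R,\ z<z^\ast\}$, which is disjoint from $K$ by optimality and has $k^\ast$ in its closure. Letting $z\to-\infty$ gives $c\ge 0$. Letting $t\to+\infty$ gives $b\le 0$, not $b\ge 0$: it is \emph{increasing} $t=H(X|U)$ that preserves feasibility.

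Your claimed sign is not harmless. Together with $r\ge 0$, $b\ge 0$ would make $\psi$ concave on $(0,1)$. Unless $\psi$ is constant, it would then push every atom of the optimal $U$ into $\{0,1\}$, i.e.\ force $H(X|U)=0$, which is infeasible whenever $R<h(p)$.

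The remaining pieces are fine: the concavity of $r$ (immediate from $r=1-\sum_y w_0(y)w_1(y)/f_y(q)$), your $c=0$ case, the merging step, and the chord from $\gamma(0)$ in the constant case.
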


\section{Preliminaries}
\label{sec: preliminaries}

\begin{lemma}[{Separating hyperplane theorem \cite[Section~2.5]{boyd2004convex}}]
\label{lem: separating hyperplane theorem}
    Let $S_1, S_2 \subset \R^{n}$ be nonempty, convex, and disjoint. Then there exist a nonzero $c \in \R^{n}$ and $\alpha \in \R$ such that
    \[
    c \cdot x \geq \alpha \geq c \cdot y, \quad \forall x \in S_1, \ y \in S_2.
    \]
\end{lemma}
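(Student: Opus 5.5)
The plan is to reduce the statement to separating a single convex set from the origin. Set $D = S_1 - S_2 = \{x - y : x \in S_1, \ y \in S_2\}$. This set is convex as the image of the convex set $S_1 \times S_2$ under a linear map, and it is nonempty. Because $S_1 \cap S_2 = \emptyset$, we have $0 \notin D$. It is enough to find a nonzero $c$ with $c \cdot z \geq 0$ for all $z \in D$. Then $c \cdot x \geq c \cdot y$ for every $x \in S_1$ and $y \in S_2$. Since both sets are nonempty, $\inf_{x \in S_1} c\cdot x$ and $\sup_{y\in S_2} c \cdot y$ are finite and satisfy $\sup_{y\in S_2} c\cdot y \le \inf_{x\in S_1} c\cdot x$. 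Any $\alpha$ between them works.

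To separate $0$ from $D$, I would first treat the case where a point $z$ lies outside the closure $\bar D$, which is closed and convex. Let $z_0$ be the Euclidean projection of $z$ onto $\bar D$; it exists and is unique by closedness, convexity, and nonemptiness. The variational inequality $(d - z_0)\cdot(z_0 - z) \geq 0$ for all $d \in \bar D$ follows by expanding $\|z - (z_0 + t(d-z_0))\|^2 \geq \|z - z_0\|^2$ for small $t>0$. With $c = z_0 - z \neq 0$ this gives
\[
c\cdot d \geq c \cdot z_0 = c\cdot z + \|c\|^2 > c \cdot z \quad \text{for all } d \in \bar D .
\]
If $0 \notin \bar D$, applying this with $z = 0$ finishes the proof.

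The main obstacle is the case $0 \in \bar D \setminus D$, where strict separation fails and a limiting argument is needed. The key fact is that $0$ is not in the interior of $\bar D$. For convex sets in $\R^n$, $\mathrm{int}\,\bar D = \mathrm{int}\, D$. I would justify this by observing that if a small simplex around a point lies in $\bar D$, then perturbing its vertices slightly into $D$ yields a simplex in $D$ that still contains the point. In the degenerate case where $D$ has empty interior, $D$ lies in a proper affine subspace, and any normal vector to that subspace already separates (weakly). So there exist $z_k \notin \bar D$ with $z_k \to 0$. The projection step gives unit vectors $c_k$ with $c_k \cdot d > c_k \cdot z_k$ for all $d \in D$. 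By compactness of the unit sphere, a subsequence satisfies $c_k \to c$ with $\|c\| = 1$. Passing to the limit gives $c \cdot d \geq 0$ for all $d \in D$, which completes the reduction from the first paragraph.
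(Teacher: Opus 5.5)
Your proof is correct and is essentially the standard argument the paper relies on. The paper gives no proof of its own but cites Boyd and Vandenberghe, Section 2.5, who use the closest-point (projection) construction and reach the general case as you do: they pass to the convex difference set $S_1 - S_2$, which does not contain $0$, and handle a boundary point of its closure by taking a limit of normalized normals. Your separate treatment of the empty-interior case is harmless but unnecessary, since the simplex-perturbation argument already gives $\mathrm{int}\,\bar D \subseteq D$ in all cases.
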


\begin{lemma}[{Identity theorem for polynomials \cite[Chapter~IV]{lang2002algebra}}]
\label{lem: identity theorem for polynomials}
    Let $P(x)$ be a polynomial. If $P(x) = 0$ on a nondegenerate interval $I \subset \R$, then $P(x) \equiv 0$ on $\R$.
\end{lemma}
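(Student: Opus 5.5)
The plan is to argue by contradiction, using only the fact that a nonzero polynomial has finitely many real roots. Suppose $P(x)=\sum_{k=0}^{n} a_k x^k$ has real coefficients, vanishes at every point of a nondegenerate interval $I\subset\R$, and is not the zero polynomial. Discarding vanishing top coefficients, we may assume $a_n\neq 0$ and $n\ge 0$. The case $n=0$ is immediate: then $P\equiv a_0\neq 0$, which contradicts $P=0$ on the nonempty set $I$.

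For $n\ge 1$, the first step is to show that $P$ has at most $n$ distinct real roots, by induction on $n$. If $r$ is a root, polynomial division gives $P(x)=(x-r)Q(x)+P(r)=(x-r)Q(x)$, where $Q$ has degree exactly $n-1$ and leading coefficient $a_n$. Any other root $s\neq r$ satisfies $(s-r)Q(s)=0$, so $Q(s)=0$, since $\R$ has no zero divisors. By the inductive hypothesis $Q$ has at most $n-1$ distinct roots, so $P$ has at most $n$.

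The second step is to note that a nondegenerate interval $I$ contains two points $a<b$, and hence the infinitely many points $a+(b-a)/m$ for $m\ge 1$. In particular $I$ contains $n+1$ distinct points, all of them roots of $P$. This contradicts the first step, so all coefficients vanish and $P\equiv 0$ on $\R$.

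An alternative route avoids counting roots. Pick an interior point $x_0$ of $I$, which exists because $I$ is nondegenerate. Since $P$ vanishes on a neighborhood of $x_0$, every derivative $P^{(k)}(x_0)$ equals $0$. Taylor's formula for polynomials, $P(x)=\sum_{k} P^{(k)}(x_0)(x-x_0)^k/k!$, then forces $P\equiv 0$. Either route works, and neither has a genuine obstacle. The only points needing care are that the root count depends on the division step over a field, and that $I$ must be nondegenerate so that it contains more than $n$ points (or an interior point).
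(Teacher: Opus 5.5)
Your proposal is correct and takes essentially the paper's approach: the paper gives no proof of its own, only a citation to Lang, Chapter IV, where the result rests on your first route (a nonzero polynomial of degree $n$ over a field has at most $n$ roots, by the same division-and-induction argument) together with the fact that a nondegenerate interval contains infinitely many points. Your alternative Taylor-expansion argument at an interior point is also valid, though not needed.
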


\section{Proof of Theorem \ref{thm: main}}
\label{sec: proof}

\subsection{Step 1: Reduction to $K$}

Consider a Markov chain $U \leftrightarrow X \leftrightarrow Y$ with $\mathcal{U}$ finite, and let $q_{u} = P(X=1|U=u)$. Then, by Markovity,
\[
P_{Y|U=u} = P_{Y|X=0}P_{X=0|U=u} + P_{Y|X=1}P_{X=1|U=u} = w_{0} (1-q_{u}) + w_{1} q_{u} = f(q_{u}).
\]
Since $P_{X|U=u} = (1-q_{u}, q_{u})$ and $\sum_{u} P_{U}(u) q_{u} = p$, we derive
\[
\sum_{u} P_{U}(u) \gamma(q_{u}) = \big(p, H(X|U), H(Y|U)\big) \in K.
\]

Conversely, let $\lambda_{u} \geq 0$ and $q_{u} \in [0,1]$, $u = 1, \dots, |\mathcal{U}|$, satisfy
\[
\sum_{u} \lambda_{u} = 1, \qquad \sum_{u} \lambda_{u} q_{u} = p.
\]
Setting
\[
P_{U}(u) = \lambda_{u}, \qquad P_{X|U=u} = (1-q_{u}, q_{u}), \qquad P_{Y|X,U} = W
\]
gives
\[
\big(p, H(X|U), H(Y|U)\big) = \sum_{u} \lambda_{u} \gamma(q_{u}).
\]
Therefore, the achievable pairs $(H(X|U), H(Y|U))$ are exactly the points $(s,t)$ with $(p,s,t) \in K$. Moreover, since
\[
I(U;X) = h(p) - H(X|U), \qquad I(U;Y) = H(Y) - H(Y|U),
\]
and both $h(p)$ and $H(Y)$ are fixed, computing \eqref{eq: IB function} is equivalent to
\[
\min \big\{t: (p,s,t) \in K, \ s \geq h(p) - R \big\}.
\]

Let $U^*$ be an optimal $U$ with $|\mathcal{U}| \leq |\mathcal{X}| + 1$, which exists by the general cardinality bound. Without loss of generality, $P_{U^*}(u) > 0$ for all $u$. Let
\[
z^* \stackrel{\Delta}{=} (p, s^*, t^*) = \big(p, H(X|U^*), H(Y|U^*)\big) \in K
\]
be its point, and let $q_{u}^* = P(X=1|U^*=u)$.

\subsection{Step 2: Separating Hyperplane}

Define
\[
S_{t^*} \stackrel{\Delta}{=} \big\{ (p,s,t): s \geq h(p) - R, \ t < t^* \big\} \subset \R^3.
\]
Then $S_{t^*}$ is convex, since it is a product of intervals.

We first prove $K \cap S_{t^*} = \emptyset$. Suppose some $(p,s,t) \in K \cap S_{t^*}$; then $s \geq h(p) - R$ and $t < t^*$. By Step 1, there is a $U$ such that $H(X|U) = s$ and $H(Y|U) = t$, which implies
\[
I(U;X) = h(p) - s \leq R, \qquad H(Y|U) = t < t^*, \ \text{ i.e., } \ I(U;Y) > I(U^*;Y).
\]
This contradicts the optimality of $U^*$. Hence $K \cap S_{t^*} = \emptyset$.

We then prove $z^* \in \overline{S_{t^*}}$. Define
\[
z^*_{n} \stackrel{\Delta}{=} \big(p, s^*, t^* - \tfrac{1}{n}\big), \quad n \geq 1.
\]
Because $U^*$ is feasible, we have $s^* \geq h(p) - R$; and because $t^* - \tfrac{1}{n} < t^*$, we derive $z^*_{n} \in S_{t^*}$. Since $z^*_{n} \to z^*$, we have $z^* \in \overline{S_{t^*}}$.

Applying the separating hyperplane theorem (Lemma \ref{lem: separating hyperplane theorem}) with $S_1 = K$ and $S_2 = S_{t^*}$, there exist $(c_0, c_1, c_2) \neq 0$ and $\alpha \in \R$ such that
\begin{equation}
\label{eq: separating inequality}
    c_0 q + c_1 s + c_2 t \geq \alpha \geq c_0 p + c_1 s' + c_2 t', \quad \forall (q,s,t) \in K, \ (p,s',t') \in S_{t^*}.
\end{equation}
Since $z^* \in K$, the left inequality gives
\[
c_0 p + c_1 s^* + c_2 t^* \geq \alpha;
\]
since $z^* \in \overline{S_{t^*}}$, the right inequality gives
\[
c_0 p + c_1 s^* + c_2 t^* \leq \alpha.
\]
Hence
\[
\alpha = c_0 p + c_1 s^* + c_2 t^*.
\]

We prove $c_2 \geq 0$. Setting $s' = s^*$ and $t' = t^* - n$ in \eqref{eq: separating inequality} gives
\[
\alpha \geq c_0 p + c_1 s^* + c_2 t^* - c_2 n = \alpha - c_2 n,
\]
i.e., $c_2 n \geq 0$ for all $n \geq 1$. Hence $c_2 \geq 0$.

Furthermore, we prove that $c_1$ and $c_2$ cannot both be $0$. Suppose $c_1 = c_2 = 0$. Then the left inequality of \eqref{eq: separating inequality} gives
\[
c_0 q \geq \alpha = c_0 p, \quad \text{i.e.,} \quad c_0 (q-p) \geq 0, \quad \forall (q,s,t) \in K.
\]
Taking $(q,s,t) = \gamma(0)$ and $\gamma(1)$ gives
\[
-c_0 p \geq 0, \qquad c_0 (1-p) \geq 0.
\]
Because $0 < p < 1$, we have $c_0 = 0$, which contradicts $(c_0, c_1, c_2) \neq 0$.

Let
\[
\psi(q) \stackrel{\Delta}{=} c_0 q + c_1 h(q) + c_2 g(q).
\]
Applying the left inequality of \eqref{eq: separating inequality} to $\gamma(q) \in K$ gives
\[
\psi(q) \geq \alpha, \quad \forall q \in [0,1].
\]
Averaging over $U^*$ gives
\begin{align*}
    \sum_{u} P_{U^*}(u) \big(\psi(q_{u}^*) - \alpha\big)
    & = c_0 \sum_{u} P_{U^*}(u) q_{u}^* + c_1 \sum_{u} P_{U^*}(u) h(q_{u}^*) + c_2 \sum_{u} P_{U^*}(u) g(q_{u}^*) - \alpha \\
    & = c_0 p + c_1 s^* + c_2 t^* - \alpha \\
    & = 0.
\end{align*}
Since every term on the left is nonnegative, $\psi(q_{u}^*) = \alpha$ for every $u$. In a word, every $q_{u}^*$ is a minimizer of $\psi$ on $[0,1]$, with minimum value $\alpha$.

\subsection{Step 3: Curvature of $\psi$}

The aim of this step is to show that the set where $\psi'' < 0$ is an interval, and Step 4 uses this to bound the number of minimizers of $\psi$.

Throughout this step, $q \in (0,1)$. Without loss of generality, $P_{Y}(y) > 0$ for all $y \in \mathcal{Y}$, which gives
\[
f_y(q) = (1-q) w_0(y) + q w_1(y) > 0, \quad \forall y \in \mathcal{Y}, \ q \in (0,1).
\]
Let $d \stackrel{\Delta}{=} w_1 - w_0$, so that $f_y(q) = w_0(y) + q\, d(y)$. A direct calculation gives
\[
h''(q) = -\frac{1}{q(1-q)}, \qquad g''(q) = -\sum_{y} \frac{d(y)^2}{f_y(q)}.
\]
Since $(c_0 q)'' = 0$, it follows that
\[
\psi''(q) = c_1 h''(q) + c_2 g''(q) = h''(q)\big(c_1 + c_2 r(q)\big),
\]
where
\[
r(q) \stackrel{\Delta}{=} \frac{g''(q)}{h''(q)} = q(1-q) \sum_{y} \frac{d(y)^2}{f_y(q)}.
\]

We now show that $r$ is concave. For each $y \in \mathcal{Y}$, let $\tilde{f}_y(q) \stackrel{\Delta}{=} (1-q) w_1(y) + q w_0(y) = w_1(y) - q\, d(y)$. Then
\[
f_y(q) \tilde{f}_y(q) = \big(w_0(y) + q d(y) \big) \big(w_1(y) - q d(y) \big) = w_0(y) w_1(y) + d(y)^2 q (1-q).
\]
Dividing by $f_y(q)$, summing over $y$, and using $\sum_{y} \tilde{f}_y(q) = 1$, we obtain
\begin{equation}
\label{eq: equation of r(q)}
    r(q) = 1 - \sum_{y} \frac{w_0(y) w_1(y)}{f_y(q)}.
\end{equation}
Each $1/f_y(q)$ is convex in $q$, being the reciprocal of a positive affine function, and $w_0(y) w_1(y) \geq 0$. Hence $r$ is concave on $(0,1)$.

Finally, since $h''(q) < 0$,
\[
\mathcal{Q} \stackrel{\Delta}{=} \big\{q \in (0,1): \psi''(q) < 0 \big\} = \big\{q \in (0,1): c_1 + c_2 r(q) > 0 \big\}.
\]
Since $c_2 \geq 0$ (Step 2), the mapping $q \mapsto c_1 + c_2 r(q)$ is concave on $(0,1)$. Hence $\mathcal{Q}$, a superlevel set of a concave function, is an interval, possibly empty.

\subsection{Step 4: At Most Two Minimizers}

With $\psi$ and $\alpha$ defined in Step 2, we have the following lemma.

\begin{lemma}
\label{lem: either at most 2 or equiv}
    If $\psi(q) = \alpha$ holds for more than two $q \in [0,1]$, then $\psi \equiv \alpha$ on $[0,1]$.
\end{lemma}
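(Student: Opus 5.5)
Suppose $\psi$ attains its minimum value $\alpha$ at three points $a<b<c$ in $[0,1]$. The plan is to show that this forces $\psi''\equiv 0$ on some nondegenerate interval. We then use analyticity, via the identity theorem, to conclude that $\psi$ is affine, hence constant.

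First we locate a concave region. Since $\psi\ge\alpha$ on $[0,1]$ and $\psi(a)=\psi(b)=\psi(c)=\alpha$, the function $\psi-\alpha$ is nonnegative and vanishes at three points. Consider $\psi$ on $[a,c]$, where $\psi$ is smooth on $(0,1)$ and continuous on $[0,1]$. If $\psi$ is not identically $\alpha$ on $[a,b]$, there is a point $m\in(a,b)$ with $\psi(m)>\alpha$. A function going from $\alpha$ up to $\psi(m)$ and back down to $\alpha$ cannot be convex on $[a,b]$. Hence there is a point $q_1\in(a,b)$ with $\psi''(q_1)<0$. Similarly, if $\psi\not\equiv\alpha$ on $[b,c]$, there is $q_2\in(b,c)$ with $\psi''(q_2)<0$.

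By Step 3, $\mathcal{Q}=\{\psi''<0\}$ is an interval. So if both $q_1$ and $q_2$ exist, then $b\in\mathcal{Q}$, and $\psi$ is strictly concave near $b$. That contradicts $b$ being an interior minimizer, since $b\in(0,1)$ because $a<b<c$. Therefore $\psi\equiv\alpha$ on at least one of $[a,b]$ or $[b,c]$, which is a nondegenerate interval $I\subset(0,1)$ (after shrinking slightly away from the endpoints). On $I$ we get $\psi''\equiv 0$, that is, $c_1+c_2r\equiv 0$ on $I$.

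Next we upgrade this to all of $(0,1)$. Using \eqref{eq: equation of r(q)}, clear denominators: multiply by $\prod_y f_y(q)$, which is positive on $(0,1)$. Then $c_1+c_2 r(q)$ becomes $P(q)/\prod_y f_y(q)$, where $P$ is a polynomial. By Lemma \ref{lem: identity theorem for polynomials}, $P\equiv 0$, so $\psi''\equiv 0$ on $(0,1)$. Thus $\psi$ is affine on $(0,1)$, and by continuity on $[0,1]$. An affine function with minimum $\alpha$ attained at the interior point $b$ is constant, so $\psi\equiv\alpha$.

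The main obstacle is the first step, namely producing a point with $\psi''<0$ on each side. This needs care because $\psi''$ may be unbounded near $0$ or $1$ when $a=0$ or $c=1$. We handle it by working only with interior points $m$ and the mean-value/convexity argument on compact subintervals of $(0,1)$; endpoint values enter only through continuity. Concretely, if $\psi''\ge0$ on $(a,b)$, then $\psi$ is convex on $(a,b)$, hence on $[a,b]$ by continuity. A convex function equal to $\alpha$ at both ends is at most $\alpha$ throughout, and combined with $\psi\ge\alpha$ this gives $\psi\equiv\alpha$ on $[a,b]$.
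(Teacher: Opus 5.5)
Your proposal is correct and follows essentially the same route as the paper. Because $\mathcal{Q}$ is an interval and $b$ is an interior minimizer, $\psi''<0$ cannot occur on both sides of $b$. Convexity then forces $\psi\equiv\alpha$ on $[a,b]$ or $[b,c]$, and clearing denominators in \eqref{eq: equation of r(q)} plus the identity theorem propagates $\psi''\equiv 0$ to all of $(0,1)$. The differences are cosmetic: you phrase the first step contrapositively and conclude from the interior minimizer $b$ rather than from two equal values.
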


\begin{proof}
    Recall from Step 2 that $\psi \geq \alpha$ on $[0,1]$. Moreover, $\psi$ is continuous on $[0,1]$ and twice differentiable on $(0,1)$. Suppose there exist $0 \leq q_1 < q_2 < q_3 \leq 1$ such that
    \[
    \psi(q_{1}) = \psi(q_{2}) = \psi(q_{3}) = \alpha.
    \]

    With $\mathcal{Q}$ defined in Step 3, we first prove that $\mathcal{Q}$ cannot meet both intervals $(q_1, q_2)$ and $(q_2, q_3)$. If it does, then, since $\mathcal{Q}$ is an interval, we have $q_2 \in \mathcal{Q}$, hence $\psi''(q_2) < 0$. However, $q_2 \in (0,1)$ and $\psi(q_{2}) = \alpha = \min \psi$, so $q_2$ is an interior minimizer of $\psi$, which forces $\psi''(q_2) \geq 0$, a contradiction.

    Therefore, $\mathcal{Q}$ misses one of $(q_1, q_2)$ and $(q_2, q_3)$. Assume it misses $(q_1, q_2)$; the other case is identical. Then $\psi''(q) \geq 0$ on $(q_1, q_2)$, so $\psi$ is convex on $[q_1, q_2]$. Since $\psi(q_{1}) = \psi(q_{2}) = \alpha$, convexity and continuity give $\psi(q) \leq \alpha$ on $[q_1, q_2]$. Combining with $\psi \geq \alpha$, we have
    \[
    \psi(q) = \alpha, \quad \forall q \in [q_1, q_2].
    \]
    Hence $\psi''(q) = 0$ on $(q_1, q_2)$. Since $\psi''(q) = h''(q)(c_1 + c_2 r(q))$ with $h''(q) \neq 0$, we derive
    \[
    c_1 + c_2 r(q) = 0, \quad \forall q \in (q_1, q_2).
    \]

    Define
    \[
    \mathcal{G}(q) \stackrel{\Delta}{=} \big(c_1 + c_2 r(q) \big) \prod_{y} f_y(q).
    \]
    Substituting \eqref{eq: equation of r(q)} gives
    \begin{align*}
    \mathcal{G}(q) & = \Big(c_1 + c_2 - c_2 \sum_y \frac{w_0(y) w_1(y)} {f_y(q)}\Big) \prod_{y'} f_{y'}(q) \\
    & = (c_1 + c_2) \prod_{y'}  f_{y'}(q) - c_2 \sum_{y} w_0(y) w_1(y) \frac{\prod_{y'}f_{y'}(q)} {f_y(q)} \\
    & = (c_1 + c_2) \prod_{y'} f_{y'}(q) - c_2 \sum_{y} w_0(y) w_1(y) \prod_{y'\neq y} f_{y'}(q).
    \end{align*}
    Since each $f_y(q)$ is affine in $q$, the function $\mathcal{G}(q)$ is a polynomial.

    Since $\mathcal{G}(q) = 0$ on the nondegenerate interval $(q_1, q_2)$, the identity theorem for polynomials (Lemma \ref{lem: identity theorem for polynomials}) gives $\mathcal{G}(q) \equiv 0$. Because $\prod_{y} f_y(q) > 0$ on $(0,1)$, we have $c_1 + c_2 r(q) \equiv 0$ on $(0,1)$, which gives
    \[
    \psi''(q) = h''(q)\big(c_1 + c_2 r(q)\big) = 0, \quad \forall q \in (0,1).
    \]
    Hence, for some $\zeta_0,\zeta_1 \in \R$,
    \[
    \psi(q)=\zeta_1q+\zeta_0 \quad \text{on } (0,1),
    \]
    and by continuity also on $[0,1]$. Since $\psi(q_1) = \psi(q_2) = \alpha$ and $q_1 < q_2$, we derive $\zeta_1 = 0$ and $\zeta_0 = \alpha$, i.e., $\psi \equiv \alpha$ on $[0,1]$.
\end{proof}

\subsection{Step 5: Construction of a Binary Optimal $U$}

By Step 2, every $q_u^*$ is a minimizer of $\psi$, i.e., $\psi(q_u^*) = \alpha$. Hence, by Lemma \ref{lem: either at most 2 or equiv}, one of two cases holds: either the $q_u^*$ take at most two distinct values, or $\psi \equiv \alpha$.

\paragraph{Case 1: the $q_u^*$ take at most two distinct values.} Let the distinct values be $\hat{q}_1, \dots, \hat{q}_{k}$, with $k \in \{1, 2\}$. Merging the symbols of $U^*$ that share the same $q_u^*$ gives a new $\hat{U} = \sigma(U^*)$, where
\[
\sigma(u) = j, \quad \text{if} \ q^*_{u} = \hat{q}_j, \ j=1, \dots, k.
\]
We now prove that $\hat{U}$ has the same point $z^*$. Since $\hat U$ is a function of $U^*$, we still have the Markov chain $\hat{U} \leftrightarrow X \leftrightarrow Y$. For $j = 1, \dots, k$,
\[
P_{\hat U}(j) = \sum_{u: \, q_u^* = \hat{q}_j} P_{U^*}(u),
\qquad
P(X=1 | \hat{U} = j) = \frac{1}{P_{\hat U}(j)} \sum_{u: \, q_u^* = \hat{q}_j} P_{U^*} (u) q_u^* = \hat{q}_j.
\]
Then, by Step 1,
\begin{align*}
\big(p, H(X|\hat U), H(Y|\hat U)\big)
& = \sum_j P_{\hat U}(j) \gamma(\hat{q}_j) \\
& = \sum_j \sum_{u: \, q_u^* = \hat{q}_j} P_{U^*}(u) \gamma(q_u^*) \\
& = \sum_u P_{U^*}(u) \gamma(q_u^*) \\
& = z^*.
\end{align*}
Therefore, $\hat U$ is optimal, with $|\hat{\mathcal U}|\leq 2$.

\paragraph{Case 2: $\psi \equiv \alpha$.} Then $\psi''(q) \equiv 0$ on $(0,1)$. Since $\psi''(q) = h''(q) (c_1 + c_2 r(q))$ with $h''(q) \neq 0$, we get $c_1+c_2r(q) \equiv 0$. In Step 2 we proved that $c_2 \geq 0$ and that $c_1, c_2$ cannot both be $0$. If $c_2 = 0$, then $c_1 + c_2 r(q) \equiv 0$ forces $c_1 = 0$, a contradiction. Thus $c_2 > 0$.

For any $U \leftrightarrow X \leftrightarrow Y$, Step 1 gives
\begin{align*}
    c_0 p + c_1 H(X|U) + c_2 H(Y|U) = \sum_u P_U(u) \psi(q_u) = \alpha.
\end{align*}
Since $c_2>0$, this gives
\[
H(Y|U) = \frac{\alpha - c_0 p - c_1 H(X|U)}{c_2},
\]
so $H(Y|U)$ is determined by $H(X|U)$. Hence any $U$ with $H(X|U) = s^*$ has $H(Y|U) = t^*$, i.e.,
\[
I(U;X) = I(U^*;X) \leq R, \qquad I(U;Y) = I(U^*;Y),
\]
so it is optimal.

It remains to find a binary $U$ with $H(X|U)=s^*$, which we do by time-sharing. For $\theta \in [0,1]$, let $T \sim \mathrm{Bern} (\theta)$ and $X' \sim P_X$ be independent of each other and of $(X,Y)$, and define
\[
U_{\theta} = \begin{cases} X, & T = 1,\\ X', & T = 0. \end{cases}
\]
Since $X$ and $X'$ are binary, $U_{\theta}$ is binary. Since $U_{\theta}$ is a function of $X$ and the pair $(T,X')$, which is independent of $(X,Y)$, we have $U_{\theta} \leftrightarrow X\leftrightarrow Y$. Moreover, conditioning on $T$ gives
\begin{align*}
    P(X=x, U_{\theta} = u) & = \theta P(X=x, X=u) + (1 - \theta) P(X=x, X'=u) \\
    & = \theta P_X(x) \mathbf{1} \{u=x\} + (1 - \theta) P_X(x) P_X(u),
\end{align*}
which is affine in $\theta$. Since $H(X|U_{\theta})=H(X,U_{\theta})-H(U_{\theta})$ is a continuous function of this pmf, $H(X|U_{\theta})$ is continuous in $\theta$. At the endpoints,
\[
H(X|U_1) = H(X|X) = 0, \qquad H(X|U_0) = H(X|X') = H(X) = h(p),
\]
since $U_1 = X$ and $U_0 = X'$ is independent of $X$. Since
\[
0 \leq s^* = H(X|U^*) \leq H(X) = h(p),
\]
the intermediate value theorem gives $\theta^*\in[0,1]$ with $H(X|U_{\theta^*})=s^*$. By the above, $U_{\theta^*}$ is optimal and binary.

In both cases, the maximum in \eqref{eq: IB function} is attained by some $U$ with $|\mathcal{U}| \leq 2$, which completes the proof of Theorem \ref{thm: main}. \hfill $\blacksquare$

\section{Discussion and Conclusion}
\label{sec: discussion}

We showed that, in the information bottleneck with a binary source $X$ and a finite target $Y$, a binary representation $U$ is always optimal, for every joint distribution and every rate constraint. Thus the general cardinality bound $|\mathcal{U}| \leq |\mathcal{X}|+1$ sharpens to $|\mathcal{U}| \leq |\mathcal{X}|$ in the binary case. We now explain why the argument does not extend to $|\mathcal{X}| \geq 3$.

For $|\mathcal{X}| \geq 3$, Steps 1 and 2 still go through. Each symbol $u$ corresponds to a point $\mathbf{q}_u = P_{X|U=u}$ of the probability simplex on $\mathcal{X}$, which has dimension $|\mathcal{X}|-1 \geq 2$, and every $\mathbf{q}_u$ of an optimal $U$ minimizes, on this simplex, a function of the form
\[
\psi(\mathbf{q}) = \ell(\mathbf{q}) + c_1 H(\mathbf{q}) + c_2 H(\mathbf{q} W),
\]
with $\ell$ affine and $c_2 \geq 0$. Steps 3 and 4, however, rely on $q$ being a scalar:

\begin{itemize}[leftmargin=*]
    \item \emph{Curvature.} In the binary case, $\psi'' = h''(c_1 + c_2 r)$ with $r$ concave, so the set where $\psi'' < 0$ is an interval. For $|\mathcal{X}| \geq 3$, the second derivative becomes the Hessian
    \[
    c_1 \nabla^2_{\mathbf{q}} H(\mathbf{q}) + c_2 \nabla^2_{\mathbf{q}} H(\mathbf{q}W),
    \]
    and the set where it fails to be positive semidefinite is determined by a generalized eigenvalue problem rather than by the sign of a single concave function.
    \item \emph{Ordering.} Step 4 uses that, among three minimizers $q_1 < q_2 < q_3$, the middle one lies between the other two. A simplex of dimension at least two has no such ordering, and the target would be at most $|\mathcal{X}|$ minimizers rather than two.
\end{itemize}

These obstructions are not merely technical. For the Hamming channels in \cite{benger2023cardinality}, every optimal $U$ needs $|\mathcal{X}|+1$ symbols at small rates. Together with Theorem \ref{thm: main}, this shows that the tight cardinality bound, i.e., the smallest $M$ such that $|\mathcal{U}| \leq M$ suffices for every $P_{XY}$, is
\[
M = \begin{cases}
    2, & |\mathcal{X}| = 2, \\
    |\mathcal{X}| + 1, & |\mathcal{X}| \geq 3.
\end{cases}
\]
Since symbols with the same $\mathbf{q}_u$ can be merged as in Step 5, the corresponding $\psi$ for these Hamming channels has at least $|\mathcal{X}|+1$ distinct minimizers; hence no argument of this type can yield $|\mathcal{U}| \leq |\mathcal{X}|$ for all sources with $|\mathcal{X}| \geq 3$. Characterizing the sources with $|\mathcal{X}| \geq 3$ for which $|\mathcal{X}|$ symbols suffice remains an open question.

\end{document}